\newtheorem{Theo}{Theorem}[section]
\newtheorem{Prop}[Theo]{Proposition}
\theoremstyle{definition} 
\theoremstyle{definition} 
\theoremstyle{definition} 
\newcommand{\be}{\begin{equation}}
\newcommand{\ee}{\end{equation}}
\newcommand{\bea}{\begin{equation}\begin{aligned}}
\newcommand{\eea}{\end{aligned}\end{equation}}
\newcommand*{\vNAlg}{\overline{\mathcal{A}}}
\newcommand{\Alg}{\mathcal{A}}
\newcommand{\Hil}{\mathcal{H}}
\newcommand{\Zen}{\mathcal{Z}}
\newcommand{\Trans}{\mathcal{T}}
\newcommand{\wlim}{\underset{\!\!n\to\infty}{\mathrm{w\!-\!lim}}\,}
\title{Generalized criteria of symmetry breaking. \\
	   A strategy for quantum time crystals}
\author{Carlo Heissenberg$^1$ and Franco Strocchi$^2$\\
	{\footnotesize $^1$ carlo.heissenberg@sns.it, Scuola Normale Superiore and INFN, Pisa, Italy}\\
	{\footnotesize $^2$ Physics Department, University of Pisa, Pisa, Italy}
}
\date{}
\begin{document}

\maketitle

\begin{abstract} 
	{The aim of this paper is to propose a criterion of spontaneous symmetry breaking that makes reference to the properties of pure phases defined by a translationally invariant state. By avoiding any reference to the ground state, at the basis of the standard approach, this criterion applies to a wider class of models. 
	An interesting application is the breaking of time translations.
	Indeed, we discuss explicit theoretical models which exhibit the prototypical features of quantum time crystals, without the need of a time-dependent Hamiltonian.}
\end{abstract}

\noindent
Keywords: Symmetry Breaking Criteria; Pure Phases; Breaking of Time Translations; Quantum Time Crystals

\section*{Introduction}

{From first principles, a symmetry of the local observables is spontaneously broken whenever it cannot be implemented by a unitary operator, \emph{i.e.} it is not a Wigner symmetry, in a given (pure) phase. This general definition, however, is operationally not easy to check in explicit cases, and in fact more practical criteria of symmetry breaking have been proposed in the literature, typically in relation to the field of application. The standard criterion relies on the non-invariance of the ground or equilibrium state, while other criteria have been proposed to cover the otherwise precluded case of the breaking of time translations \cite{Watanabe, Else, KhemaniPRB94085, Khemani}. 

The main purpose of this paper is to propose a criterion of spontaneous symmetry breaking which directly involves the properties of a pure phase defined by a translationally invariant state. In this way, in our opinion, one has to control properties which are relatively easy to check and have a simple operational meaning, also in view of the experimental preparation.

As it is well known from general theoretical treatments \cite{Haag, BratRobI, BratRobII}, the implementation of the symmetry in a mixed phase does not imply the absence of a symmetry breaking order parameter which might appear in the pure phase decomposition. For this reason, we directly consider pure phases. Another crucial ingredient for the criterion of spontaneous breaking of symmetries that commute with space translations is the uniqueness of the translationally invariant state in the given phase. We are therefore lead to propose a criterion for the breaking of an internal symmetry in phases defined by a translationally invariant state that satisfies the cluster property.

Even if this setting includes the special case of a pure phase defined by a translationally invariant ground or equilibrium state, it is much more general and in particular provides an escape to the no-go theorem \cite{Bruno, Watanabe, Khemani} for the breaking of time translations and the existence of time crystals in phases of Hamiltonian systems defined by a ground or equilibrium state.
}

{
In Section \ref{sec: stbb}, after reviewing the general theoretical framework, we introduce our generalized criterion of spontaneous symmetry breaking. In Section \ref{sec: constructive!}, we consider its possible use for the breaking of time translation symmetry. The comparison with other strategies recently proposed in the literature is then discussed.
} 

{In Section \ref{sec: models} we present explicit models described by \emph{time-independent} Hamiltonians, which exhibit the spontaneous breaking of time translations. In particular we discuss models with a residual discrete subgroup of time translations; such a periodicity in time realizes the characteristic property of quantum time crystals.}

In the first example, this mechanism is realized thanks to the occurrence of a nontrivial topology, which gives rise to different pure phases, with no need for the thermodynamic limit. 
The second example is a spin model with short-range interactions and  coupling to a time-independent external field. In this case the thermodynamic limit and the pure phases defined by a translationally invariant state play an essential role. In our opinion such models, even without aiming to have a direct experimental realization, qualify as basic theoretical prototypes of quantum time crystals, achieved without the introduction of an external, periodic drive, which would spoil the one-parameter group property of the dynamics \cite{Yao}. The strategy of looking for a realization of quantum time crystals by spin models, proposed in \cite{FSCH_QTC} {(see also \cite{HeisenbergQTC})}, has been realized in a clever experimental setup \cite{Autti-Volovik}, where the external periodic drive is eventually sent to zero; the experiment detects spin oscillations surviving the limit in which the external drive is removed, showing the possibility of quantum time crystals. 
Another interesting proposal in this direction has been put forward in \cite{Fazio-Floquet}.
Finally, we discuss an improved version of the Wilczek model \cite{Wilczek}, originally proposed as an example of quantum time crystal.

{
Let us stress once more that, although most recent developments concerning the realization of quantum time crystals have mainly resorted to the introduction of a fine-tuned external periodic drive in the spirit of Floquet time crystals \cite{Yao,Khemani, timecrystalsREVIEW, Autti-Volovik, Federica-clock, MatusSacha-fractional, Federica-gauge}, we only consider models with time-independent Hamiltonians. 
}

In all {our} examples, the dynamics is described by a one-parameter group of automorphisms of the algebra of observables, which, however, cannot be implemented by a family of unitary operators in the Hilbert space of the system, a characteristic feature of any broken symmetry 
The time evolution of the observables is thus well-defined, maps the algebra of observables into itself, and in this way defines the time evolution of their expectations, but this does not correspond to a dual interpretation in terms of transformations of the states.
This means that the dynamics is well-represented in the Heisenberg picture but not in the Schr\"odinger picture.

\section{Symmetry Breaking and its Operational \\Criteria}\label{sec: stbb}
In order to discuss criteria of symmetry breaking, generalizing the standard identification with noninvariant ground states, we start by recalling the precise, general definition of spontaneous symmetry breaking.
 
 \subsection{Spontaneous symmetry breaking}
 \label{sec: Wsbs}
 A transformation $\beta$ of the canonical variables, or more generally of the algebra of observables, defines a (algebraic) symmetry if it preserves all the algebraic relations (\emph{e.g.} the canonical commutation relations). 

 In the following, we shall denote by $\Alg$ the algebra generated by the canonical variables or by the observables; technically it is convenient to consider the $C^\ast$-algebra generated by them.
 
 A state $\omega$ of the system is characterized by the set $\{\omega(A),A\in\Alg\}$ of its expectations; by a general result, each state defines a (GNS \cite{GNoriginal, S_original}) \emph{representation}, \emph{i.e.} a realization, of $\Alg$.
 
 In order to understand the phenomenon of Spontaneous Symmetry Breaking (SSB), a crucial point is that a given physical system, described by its algebra $\Alg$, may admit different disjoint realizations, or briefly \emph{phases},
 with the characteristic property of being stable under the action of $\Alg$.
 In the case of the algebra of observables, this means that no experimental device or physical process may induce transitions between different phases.
 
 This structures significantly emerges in the description of infinitely extended systems, as displayed by many-body systems in the thermodynamic limit or by relativistic quantum fields.
 
 Furthermore, for the description of SSB a crucial role is played by the so-called \textit{pure phases}. At zero temperature these are defined as the irreducible representations of $\Alg$. At finite temperature irreducibility does not hold and must be replaced by a less stringent condition, {satisfied by the standard thermodynamical pure phases (at equilibrium): one has to consider \emph{factorial} representations, namely those in which every element of the center $\mathcal Z$ of the observables is represented by a multiple of the identity.
 (Technically, we recall that a given representation naturally associates to $\Alg$ its weak closure $\vNAlg$, whose center $\Zen$ is the set of operators of $\vNAlg$ {that} commute with $\vNAlg$.)}
 
 For finitely extended systems, factoriality automatically provides a characterization of inequivalent representations of $\Alg$.
 For infinitely extended systems this property takes into account that average observables (equivalently, their values at space infinity) commute with $\vNAlg$ and therefore belong to $\Zen$; since they describe macroscopic quantities, they must attain sharply defined values according to the standard thermodynamical characterization of a pure phase. Factoriality can therefore be taken as characterization of pure phases.
 
 It is important to stress that a factorial representation of the algebra of observables describes a stable realization of the system, since any physically realizable operation (or observable) is essentially localized and therefore cannot change the boundary conditions, equivalently the values of the observables, at space infinity. It is also worthwhile to remark that a pure phase need not contain (or be defined by) an equilibrium or ground state; in fact one can explicitly construct pure phases defined by a stationary state not invariant under time translations. 
 This will play a crucial role for the breaking of time translations (see Section \ref{sec: constructive!}).

Another argument in favor of considering pure phases is their stability against small perturbations of the dynamics, as clearly emphasized by Weinberg {\cite[pp. 166-167]{WeinbergQFT2}}. 
 
 After these premises, we have the following general definition of SSB:
 \\
 \textit{A symmetry $\beta$ is \textbf{spontaneously broken} in a pure phase $\Gamma$ when it does not leave $\Gamma$ stable, \emph{i.e.} if there exists $\omega\in\Gamma$ such that $\omega_\beta$, defined by $\omega_\beta(A)=\omega(\beta^{-1}(A))$ for all $A\in\Alg$, does not belong to $\Gamma$.
 Equivalently, when $\beta$ cannot be realized as a unitary operator on the corresponding Hilbert space $\Hil_\Gamma$. The alternative corresponds to the symmetry being \textbf{unbroken} (Wigner symmetry).} 
 \\
 From an operational point of view, SSB is signaled by the noninvariance of some transition amplitude. 
 
 Clearly, if, for a given temperature, $\Alg$ admits only one pure phase, there is no room for SSB. Typical examples are the atomic systems (more generally, quantum-mechanical systems with a finite number of degrees of freedom described by the standard canonical variables), as a consequence of the Stone-von Neumann uniqueness theorem \cite{Stone, vonNeumann}. 
 On the other hand, in general, infinitely extended systems admit different, inequivalent realizations corresponding to different pure phases.

 \subsection{Generalized criteria of symmetry breaking}\label{sec: Ground}
 
The general characterization of SSB given above does not qualify as an easily checkable criterion for its occurrence.

The standard criterion identifies SSB with the noninvariance of the ground state that defines the corresponding phase.
{For systems in the thermodynamic limit}, we propose an alternative, more general and useful criterion of SSB, under the following conditions: 
\begin{enumerate}[label=(\roman*)]
	\item space translations $\mathrm{\mathbf{T}}$ are symmetries of $\Alg$;
	\item $\beta$ commutes $\mathrm{\mathbf{T}}$;
	\item one considers a pure phase $\Gamma_\omega$ (not necessarily at zero temperature) defined  
	by a state $\omega$ invariant under a subgroup $\Trans$ of $\mathrm{\mathbf{T}}$;
	\item $\Alg$ satisfies $\Trans$--asymptotic abelianess, namely 
	\begin{equation}
	\lim_{n\to\infty} \left[T^n(A), B\right]=0\qquad \forall A,B\in \Alg 
	\end{equation}
	for all nontrivial $T\in\Trans$ (at least as weak limit).
\end{enumerate}

Properties (i) and (ii) are in general satisfied for SSB considered in many-body theories and quantum field theories, while (iv) is the mildest requirement of localization, always satisfied in all algebras of observables used to describe condensed matter systems and elementary particles.

It is important to stress that the state $\omega$ in (iii) does not have to be a state invariant under time translations, \emph{i.e.} it need not be the ground/vacuum state. 
The crucial property shared by $\omega$ is that, as a consequence of factoriality ad asymptotic abelianess, it satisfies the cluster property \cite[pp.126 and 134]{Haag}
\be\label{cluster_T}
\lim_{n\to\infty} \left[
\omega\left(AT^n(B)\right)-\omega(A)\omega(T^n(B))
\right]=0\,.
\ee
Furthermore, since $\omega$ is invariant under translations,
\eqref{cluster_T} is equivalent to $\omega$ being the \emph{unique translation invariant state} in $\Gamma_\omega$.

Therefore the following criterion has a much wider range of application, compared to the standard one 

\begin{Prop}[General criterion of SSB]\label{criterio_t}
Under the conditions {$\mathrm{(i)}$---$\mathrm{(iv)}$}, the symmetry $\beta$ is spontaneously broken in $\Gamma_\omega$ if and only if there exists an $A\in\Alg$ such that
\be\label{eq: crit_inv}
\omega\left(\beta(A)\right) \neq \omega(A)\,,
\ee
\emph{i.e.} $\omega$ gives rise to a symmetry breaking order parameter.
\end{Prop}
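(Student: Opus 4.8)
The plan is to reduce the stated equivalence to the single assertion that $\beta$ is \emph{unbroken} in $\Gamma_\omega$ if and only if $\omega$ is $\beta$-invariant, i.e. $\omega(\beta(A))=\omega(A)$ for all $A\in\Alg$. Since \eqref{eq: crit_inv} is precisely the negation of this invariance, the Proposition follows by proving the two implications of this equivalence and taking contrapositives. Throughout I would work in the GNS representation $(\Hil_\omega,\pi,\Omega)$ associated with $\omega$, identify the pure phase $\Gamma_\omega$ with the folium of this factorial representation, and recall that, by hypothesis, the cluster property \eqref{cluster_T} guarantees that $\omega$ is the unique $\Trans$-invariant state in $\Gamma_\omega$.

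For the easy direction, suppose $\omega$ is $\beta$-invariant. Then the assignment $U\pi(A)\Omega := \pi(\beta(A))\Omega$ is well defined and isometric on the dense domain $\pi(\Alg)\Omega$, because $\|\pi(\beta(A))\Omega\|^2=\omega(\beta(A^*A))=\omega(A^*A)=\|\pi(A)\Omega\|^2$, using that $\beta$ is an algebra homomorphism together with invariance; it extends to a unitary on $\Hil_\omega$ fixing $\Omega$ and satisfying $U\pi(A)U^*=\pi(\beta(A))$. Thus $\beta$ is implemented by a unitary and is unbroken; contrapositively, a broken $\beta$ forces some $A$ with $\omega(\beta(A))\neq\omega(A)$.

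For the converse, suppose $\beta$ is unbroken, so it leaves $\Gamma_\omega$ stable and is implemented by a unitary $U$ on $\Hil_\omega$. Then $\omega_\beta(A)=\omega(\beta^{-1}(A))=\langle U\Omega,\pi(A)U\Omega\rangle$ is a vector state of the representation, hence $\omega_\beta\in\Gamma_\omega$. The key step is to observe that $\omega_\beta$ is again $\Trans$-invariant: for $T\in\Trans$, condition (ii) gives $\beta^{-1}\circ T = T\circ\beta^{-1}$, so $\omega_\beta(T(A))=\omega(T(\beta^{-1}(A)))=\omega(\beta^{-1}(A))=\omega_\beta(A)$, where the middle equality uses the $\Trans$-invariance of $\omega$ from (iii). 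Invoking the uniqueness of the $\Trans$-invariant state in $\Gamma_\omega$ then forces $\omega_\beta=\omega$, i.e. $\omega(\beta^{-1}(A))=\omega(A)$ for all $A$, equivalently $\omega(\beta(A))=\omega(A)$, which is exactly the $\beta$-invariance of $\omega$.

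The substantive content, and the step I expect to be the main obstacle, is this converse implication, where all four hypotheses enter together. The commutation (ii) is what propagates $\Trans$-invariance from $\omega$ to $\omega_\beta$, while (i), (iii) and (iv) combine---through factoriality and $\Trans$-asymptotic abelianness---to produce the cluster property \eqref{cluster_T} and hence the uniqueness of the $\Trans$-invariant state that collapses $\omega_\beta$ onto $\omega$. Without this uniqueness one could only conclude that $\omega_\beta$ is \emph{some} $\Trans$-invariant state in the phase, not $\omega$ itself; it is precisely uniqueness that upgrades the criterion from a one-sided sufficient condition to a genuine equivalence. I would also be careful to use \emph{unbroken} in the representation-theoretic sense of the general definition, so that unitary implementability really does deliver $\omega_\beta$ as a vector state lying in $\Gamma_\omega$.
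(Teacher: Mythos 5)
Your proof is correct and takes essentially the same route as the paper's: the forward direction via unitary implementability of $\beta$ from the $\beta$-invariance of $\omega$ (which the paper obtains by citing a general result and you make explicit through the GNS construction), and the converse by noting that $\omega_\beta$ is $\Trans$-invariant because $\beta$ commutes with $\Trans$, then invoking the uniqueness of the $\Trans$-invariant state in the pure phase to force $\omega_\beta=\omega$. The extra detail you supply (the explicit isometry argument and the identification of $\omega_\beta$ as a vector state in $\Gamma_\omega$) only fills in steps the paper leaves to the cited reference.
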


\begin{proof}
	By a general result (see \emph{e.g.} \cite[p.71]{FS_SSB}), the invariance of $\omega$ under $\Trans$ implies that $\Trans$ is unbroken, and therefore implemented by unitary operators $U_T$, for all $T\in\Trans$. The same holds for $\beta$ if $\omega\left(\beta(A)\right)=\omega(A)$ for all $A\in\Alg$.
	
	Conversely, if $\beta$ is unbroken, \emph{i.e.} implemented by a unitary operator $U_\beta$ in the Hilbert space corresponding to $\Gamma_\omega$, the state $\omega_\beta$ 
	is invariant under $\Trans$, because so is $\omega$ and $\beta$ commutes with $\Trans$. 
	On the other hand, since $\omega$ defines a pure phase and is invariant under translations,
	it is the unique translation invariant state. Then $\omega_\beta=\omega$.
\end{proof}

In infinitely extended systems, under conditions (i)---(iv), the characterization of SSB by equation \eqref{eq: crit_inv} is equivalent to the existence of a \textit{long-range order}, as a consequence of the cluster property:
\begin{equation}
\begin{aligned}
\lim_{n\to\infty}\omega\left(T^n(\Delta A)B\right)=\omega(\Delta A)\omega(B)\neq 0,
\end{aligned}
\end{equation}
where $\Delta A \equiv \beta(A)-A$.

However, the noninvariance under $\beta$ of a generic state in the given (pure) phase does not say anything about $\beta$ being spontaneously broken: this is why eq. \eqref{eq: crit_inv} involves a special state $\omega$ invariant under a subgroup of translations and defining a pure phase (condition (iii)). 

An alternative criterion of SSB may be formulated for a pure phase $\Gamma$ without making reference to the above conditions (i)---(iv), by exploiting the transformation properties of the center $\mathcal Z$ (defined by weak limits of elements of $\Alg$) under the given symmetry.
\begin{Prop}\label{Criterio_centro}
	Given a pure phase $\Gamma$, if there exists $z\in\Zen$, given by
	\be\label{z_1c}
	z=\wlim A_n
	\ee 
	for $A_n\in\Alg$, and if, for some $\omega\in\Gamma$, 
	\be\label{z_2c}
	\omega(\beta(A_n))-\omega(A_n) \centernot\longrightarrow 0
	\ee
	as $n\to\infty$, then $\beta$ is spontaneously broken in $\Gamma$.
\end{Prop}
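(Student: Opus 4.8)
The plan is to argue by contraposition. Suppose $\beta$ is \emph{not} spontaneously broken in $\Gamma$, i.e.\ it is a Wigner symmetry, implemented by a unitary operator $U_\beta$ on $\Hil_\Gamma$; identifying each element of $\Alg$ with its representative on $\Hil_\Gamma$, this means $\beta(A)=U_\beta\,A\,U_\beta^{-1}$. I will show that this assumption forces $\omega(\beta(A_n))-\omega(A_n)\to 0$, in contradiction with \eqref{z_2c}, and hence conclude that $\beta$ must be broken.

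The first ingredient is factoriality. Since $\Gamma$ is a pure phase, its representation is factorial, so the center $\Zen$ reduces to multiples of the identity; the central element $z$ of \eqref{z_1c} therefore equals $\lambda\,\mathbf{1}$ for some scalar $\lambda$. Being a scalar, $z$ is invariant under conjugation by any unitary, in particular $U_\beta\,z\,U_\beta^{-1}=z$.

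The second ingredient is the weak continuity of conjugation by a fixed unitary. Applying $U_\beta(\,\cdot\,)U_\beta^{-1}$ to the weak limit \eqref{z_1c} yields $\wlim \beta(A_n)=\wlim U_\beta A_n U_\beta^{-1}=U_\beta\,z\,U_\beta^{-1}=z$, so that $A_n$ and $\beta(A_n)$ converge weakly to the \emph{same} central element $z$. Evaluating in the state $\omega\in\Gamma$ then gives $\omega(A_n)\to\omega(z)=\lambda$ and $\omega(\beta(A_n))\to\omega(z)=\lambda$, whence $\omega(\beta(A_n))-\omega(A_n)\to 0$, the desired contradiction.

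The step requiring the most care is the last one, namely passing the weak limit through the state $\omega$. When $\omega$ is a vector state this is immediate from the definition of the weak operator topology, but for a general (density-matrix) state of the factor one must first observe that a weakly convergent sequence is uniformly norm-bounded, by the uniform boundedness principle, so that weak-operator convergence upgrades to ultraweak convergence on bounded sets and the expectations in $\omega$ indeed converge. The remainder of the argument is structural rather than computational, and its real engine is factoriality: it is precisely the centrality of $z$ that renders it immune to conjugation by $U_\beta$ and thereby equalizes the two limits.
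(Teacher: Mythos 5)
Your proof is correct and follows essentially the same route as the paper's: contraposition via unitary implementability, weak continuity of conjugation by $U_\beta$ applied to the weak limit \eqref{z_1c}, and factoriality forcing the central element to satisfy $U_\beta\, z\, U_\beta^{-1}=z$, whence $\omega(\beta(A_n))-\omega(A_n)\to 0$, contradicting \eqref{z_2c}. Your extra care in upgrading weak-operator convergence to ultraweak convergence via uniform boundedness (so that a general normal state $\omega\in\Gamma$, not only a vector state, can be evaluated on the limit) is a sound refinement of a step the paper leaves implicit.
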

\begin{proof}
	If $\beta$ is unbroken, it is implementable by a unitary operator $U_\beta$ and therefore $\beta(A_n)$ is weakly convergent; indeed, for any state vector $\Phi$, letting $\Phi_\beta=U_\beta \Phi$,
	$$
	(\Phi, \beta(A_n) \Phi)=(U_\beta \Phi, A_n U_\beta \Phi)=(\Phi_\beta, A_n \Phi_\beta).
	$$
	Hence, $\beta$ can be extended to the center by weak continuity and
	$$
	\beta(z)=\beta\big(\wlim A_n\big)=U_\beta^\ast z U_\beta.
	$$
	In a factorial representation, the center is represented by multiples of the identity and therefore $\beta(z)=z$.
\end{proof}

{As above, this criterion is more general than the conventional one}

The symmetry breaking criterion given by Proposition \ref{Criterio_centro} becomes particularly simple when $z\in\Alg$, no weak limit being involved, and it also works for finite systems (an example is provided by the particle on a circle, as discussed below).

In a pure phase, defined by a state $\omega$ satisfying (iii), a useful class of central elements is available,  
because by asymptotic abelianess the following limit exist
\be
A_{\text{av}}
=\wlim \frac{1}{N}  \sum_{n=0}^N T^n(A)
=\wlim T^n(A) = \omega(A) \mathds 1\,,
\ee
where
$
T\in\Trans
$ and $A\in\Alg$, and have the meaning of variables at infinity or macroscopic observables.

\section{Spontaneous Breaking of Time Translations: a Constructive Strategy}\label{sec: constructive!}

{
Since the expectation of any operator on the
ground (or equilibrium) state $\omega_0$ is by definition invariant under time evolution $\alpha_t$, and therefore cannot signal the breaking of time translations, a way out has been to devise criteria which consider the properties of two-point functions $\langle A(t)B(0)\rangle$.
}

{In \cite{Watanabe}, the authors propose to define time crystals
by ground-state or equilibrium correlation functions such that
\be\label{WatOsh}
\lim_{|\mathbf x|\to\infty}\langle \phi(\mathbf x,t) \phi(0,0)\rangle
\equiv
\lim_{|\mathbf x|\to\infty}\omega_0\left( \phi(\mathbf x,t) \phi(0,0) \right)  
= f(t)
\ee 
with $f$ a nontrivial periodic function of $t$.
However, if $\omega_0$ (is invariant under translations and) satisfies the cluster property, \emph{i.e.} defines a pure phase, one has (eq. \eqref{cluster_T})\footnote{The limit $\mathbf x \to \infty$ may be taken on a periodic subsequence $\mathbf x_n = n \mathbf a$, as needed in the case of space crystals, where the ground or equilibrium state is invariant under a subgroup $\mathcal T$ of space translations (see Section \ref{sec: Ground} and eq. \eqref{cluster_T}). }
\be\label{crit_WO}
\lim_{|\mathbf x|\to\infty}\langle \phi(\mathbf x,t) \phi(0,0) \rangle
=
\lim_{|\mathbf x|\to\infty}\langle \phi(\mathbf x,t)\rangle
\langle \phi(0,0)\rangle=
\langle \phi(0,t)\rangle
\langle \phi(0,0)\rangle
\,.
\ee
Then $f(t)=\langle \phi(0, t) \rangle\langle \phi(0, 0) \rangle$, which is a constant. Therefore, \eqref{WatOsh} may be effective only if $\omega_0$ violates the cluster property.

Under these general conditions, the argument actually holds for any internal symmetry 
$\beta_\lambda$: 
\be\label{invbeta}
\lim_{|\mathbf x|\to\infty}\langle \beta_\lambda \left(A(\mathbf x)\right) B\rangle = \langle \beta_\lambda(A)\rangle \langle  B\rangle=f(\lambda)\,,
\ee 
and the criterion is equivalent to the standard condition $\langle\beta_\lambda(A)\rangle\neq \langle A\rangle$.

A more general criterion has been proposed in \cite{Khemani}, which in the case of time translations includes the one proposed for Floquet systems \cite{Else}. For an internal symmetry $\beta_\lambda$, one considers states $|n\rangle$ which are eigenstates of the dynamics and of $\beta_\lambda$. The criterion identifies the spontaneous breaking of $\beta_\lambda$ with the existence of operators $\Phi_{i,\alpha}$ and $\Phi_{i,\bar\alpha}$ ($i$ being the site label) transforming as 
\be\label{irreptheta}
\beta_\lambda(\Phi_{i,\alpha})=e^{i\theta_\alpha(\lambda)}\Phi_{i,\alpha}
\ee with $\theta_{\bar\alpha}=-\theta_{\alpha}$ and $\theta_\alpha(\lambda)\neq0$ (implying $\langle n|\Phi_{i,\alpha}|n\rangle=0$) and such that
\be
\lim_{|i-j|\to\infty}\langle n|\Phi_{i,\alpha}\Phi_{j,\bar\alpha}|n\rangle_c \neq 0\,,
\ee
where $\langle\rangle_c$ denotes the connected part.

If $|n\rangle$ is invariant under (a suitable subgroup of) translations, its decomposition into translationally invariant states $\omega_\mu$ obeying the cluster property \cite{BratRobI} implies the existence of at least one $\omega_\mu$ which satisfies our criterion \eqref{eq: crit_inv}.
In fact, performing this decomposition gives ($c_{n,\mu}$ are positive coefficients with $\int d\mu\, c_{n,\mu}=1$)
\be
\langle n| \Phi_{i,\alpha} \Phi_{j,\bar\alpha} |n\rangle 
= \int d\mu\, c_{n,\mu}\, \omega_\mu(\Phi_{i,\alpha} \Phi_{j,\bar\alpha})
\xrightarrow[|i-j|\to\infty]{} 
\int d\alpha\, c_{n,\mu}\, \omega_\mu(\Phi_{i,\alpha}) \omega_\mu(\Phi_{j,\bar\alpha})\,,
\ee
where we have used the cluster property for $\omega_\mu$,
and the last expression can be nonzero only if $\omega_\mu(\Phi_{i,\alpha})\neq0$ and $\omega_\mu(\Phi_{j,\bar \alpha})\neq0$ for at least one $\alpha$. Then, Proposition \ref{criterio_t} applies. Thus the above criterion only guarantees that $\beta_\lambda$ is not a Wigner symmetry in at least one pure phase occurring in the pure-state decomposition of $|n\rangle$, without identifying this phase directly.

On the other hand, our criterion, Proposition \ref{criterio_t}, which applies also to the case of time translations, does not require to use operators satisfying \eqref{irreptheta}, whose existence is guaranteed if $\beta_\lambda$ is compact or if the symmetry acts as a product of single-site terms. Neither condition holds for time translations, which typically induce space delocalization, as also recognized by \cite{Khemani}. Moreover, Proposition \ref{criterio_t} directly identifies the phase where the symmetry is not a Wigner symmetry and may be more suitable from an operational point of view.

Another strategy, which has been at the basis of recent developments on the realization of quantum time crystals, is that employed by Floquet systems, proposed by \cite{Else} (for a recent review, see \cite{timecrystalsREVIEW}). In this setup the dynamics $\alpha_t^{(T)}$ is defined on the observables by the formal action of the unitary operator 
\be
\begin{cases}
	U_1(t) = e^{-i t H_1} &0<t<t_1\\
	U_2(t) = e^{-i (t-t_1) H_2}U_1(t_1) &t_1<t<t_1+t_2=T\,,
\end{cases}
\ee
with periodicity $T$. Clearly this means that the system is not isolated, being driven by an external periodic interaction. 
As stressed above, in order to break the symmetry of such a dynamics, one has to consider a pure phase which does not contain states invariant under the Floquet dynamics, equivalently one may employ Floquet eigenstates that violate the cluster property. 

Another advantage of our criterion is that it applies also to systems with time-independent Hamiltonians, with no need to introduce an external periodic drive and therefore to violate the one-parameter group property of the dynamics.

For the above reasons, in order to break time translations, we are led to consider pure phases characterized by translationally invariant states which are not ``eigenstates''of the full dynamics, in the thermodynamic limit, according to Proposition \ref{criterio_t}.

For constructing states which are not invariant under $\alpha_t$, a possible practical way is to consider a quantum system with a dynamics defined by cutoff Hamiltonians $H_V=H_{1,V}+H_{2,V}$, with $V$ indexing the cutoff (volume and/or number of constituents), and a corresponding state $\omega$, \emph{invariant under a group of translations $\mathcal T$}, which is invariant under the ``reference'' dynamics $\alpha_{1,t}$
\be\label{infinite-volume}
\alpha_{1,t} (A)=\lim_{V\to\infty}e^{it H_{1,V}} A e^{-it H_{1,V}}\,,\text{ for all }A\in\Alg\,,
\ee
but not under $\alpha_t$
\be
\alpha_t(A)=\lim_{V\to\infty}e^{it H_V} A\, e^{-it H_V}\,,\text{ for all }A \in \Alg\,.
\ee 
Clearly, from a practical point of view, the simpler the action of $H_2$ on $\omega$, the easier the analysis of the model. }

{
This setting goes beyond the standard quenched models since the invariance under translations and the thermodynamic limit play a crucial role.}

\section{Models of Broken Time Translations. \\Time Crystals.}
\label{sec: models}

We discuss concrete physical models that exhibit the spontaneous breaking of time translations. In some cases, the breaking occurs with residual periodic unbroken symmetry, realizing the basic characteristic features of quantum time crystals.

\subsection{Residual periodicity due to topology}\label{sec: particleonacircle}

As remarked in Section \ref{sec: Ground}, the existence of different pure phases may occur also for finite systems if the algebra of observables contains a nontrivial center, so that it is not described by the standard canonical (Weyl) algebra.

A prototypic example is given by a particle with charge $q$ and unit mass confined to a ring of unit radius that is threaded by (magnetic) flux $2\alpha/q$.
When $\alpha=0$ the model becomes the familiar quantum mechanical
model of a particle on a ring. In the textbook presentations, it is 
not sufficiently emphasized that the nontrivial topology of the circle
has a strong impact on the identification of the algebra of observables  
$\Alg$, which is restricted to be the subalgebra of the Weyl algebra invariant under rotations of $2\pi$, and therefore generated by 
$U(n)=e^{in\varphi}$ for $n\in\mathbb Z$ and by $V(\beta)=e^{i\beta p}$, where $p$ denotes the (angular) momentum \cite{Strocchi2015GAUGE}.

The observable $V(2\pi)=e^{i2\pi p}$ generates rotations of $2\pi$, and therefore belongs to the \emph{center} of $\Alg$. Its spectrum, \emph{i.e.} the expectation $\langle e^{i2\pi p}\rangle_\theta=e^{i\theta}$ for $\theta \in [0,2\pi)$, labels  irreducible representations of the observable algebra, which are hence called $\theta$-\emph{sectors}.
	The corresponding Hilbert-space representation is given either by $p= -i\partial/\partial \varphi + \theta/2\pi$ acting on periodic functions or, equivalently, by	$p=-i\partial/\partial\varphi$ acting on pseudo-periodic functions such that $\psi(\theta)=e^{i\theta}\psi(0)$.

An important observations is that momentum shifts $p\mapsto p + \alpha/2\pi$ given by
\be\begin{aligned}
	\rho^\alpha:\ &U(n)\mapsto U(n),\\
	&V(\beta)\mapsto V(\beta)e^{i\alpha/2\pi}
\end{aligned}\ee
define a symmetry of $\Alg$ which is broken in each $\theta$-sector unless $\alpha$ is an integer multiple of $2\pi$, the reason being that $\rho^\alpha$ does not commute with the center:
\be
\langle \rho^\alpha\left(e^{i2\pi p}\right)\rangle_\theta = e^{i(\theta+\alpha)}=\langle e^{i2\pi p}\rangle_{\theta+\alpha}.
\ee 
Let us mention that this situation is strikingly similar to that of QCD with $V(2\pi n)$ playing the role of large gauge transformation and $\rho^\alpha$ of chiral transformation \cite{FS_M, FS_QFT} and that this models spontaneously breaks parity $\Omega: p\to-p$, $\varphi\to-\varphi$ unless $\theta=0,\pi$.

Now, let us denote by $\alpha_t^\alpha$ the dynamics in the presence of magnetic flux, \emph{i.e.} generated by the Hamiltonian
\be
H^\alpha = \frac{1}{2}(p-\alpha)^2\,,
\ee
which can be expressed in terms of the dynamics at vanishing magnetic flux $\alpha^0_t$ and the momentum shift symmetry as follows
\be 
 \alpha^\alpha_t = \rho^{-2\pi\alpha}  \alpha^{0}_t  \rho^{2\pi\alpha}.
\ee
The $\theta$-sectors correspond to the factorial representations defined by eigenstates of $\alpha^\alpha_t$.
This dynamics plays the role of the reference dynamics (the analog of that generated by $H_1$ in the previous general discussion) with eigenstates labeled by $\theta$.

In order to follow the general strategy outlined above, we now introduce a perturbation of $\alpha^\alpha_t$, leading to the new dynamics $\alpha^{\alpha,k}_t$ such that a given $\theta$-sector does not contain eigenstates of  $\alpha^{\alpha,k}_t$. A simple example fulfilling such requirements is obtained by considering  $\alpha_t^{\alpha,k}$ as the automorphism formally defined by 
\be
 H^{\alpha, k} \equiv H^{\alpha} - k\, \varphi\,.
\ee
For simplicity we consider the case $\alpha=0$, the extension to nonzero $\alpha$ being straightforward.

The a Hamiltonian $H^{0,k}$ does not
belong to the algebra of observables, but nevertheless the corresponding
time evolution $\alpha^{k}_t$ maps the algebra of observables $\Alg$ into itself,
as required for physical consistency. In fact, using Zassenhaus' formula \cite{FS_SSB},
\be\begin{aligned}
	\alpha^{k}_t \left(U(n)\right) &= e^{	\frac{it}{2} p^2}U(n)e^{-\frac{it}{2} p^2} e^{\frac{it^2}{2}nk}
	=U(n) V(nt) e^{int(n+kt)/2}
	\in\Alg\\
	\alpha^{k}_t \left(V(\beta)\right) &=
	V(\beta) e^{i\beta k t}\,.
\end{aligned}\ee
This dynamics corresponds to $p\mapsto p + kt$ (\emph{i.e.} to a constant acceleration) and, formally, to $\varphi\mapsto \varphi + pt + kt^2/2$\,.   
The crucial fact is that such a dynamics does not leave the center pointwise invariant:
\be
\alpha^{k}_t \left(V(2\pi)\right) = V(2\pi) e^{i2\pi k t}.
\ee
This equation also holds for $\alpha_t^{\alpha, k}$ for any $\alpha$,
and implies that time translation symmetry is spontaneously broken in each $\theta$-sector, leaving as a residual unbroken group the discrete subgroup corresponding to 
\be
t_n = \frac{n}{k}\qquad \text{for }n\in\mathbb Z.
\ee 

Restricting the discussion to irreducible/factorial representations is of course crucial for obtaining symmetry breaking in this finite-dimensional model: in the (reducible) representation with a decompactified variable $q$ such that $\varphi= q \text{ mod }2\pi$, given by $L^2(\mathbb R, dq)$, time evolution is indeed implemented by a unitary operator which, however, does not leave each $\theta$-sector invariant.

Albeit very simple, the model proves that one may (spontaneously) break time translation symmetry with residual periodic group. This proves that the characteristic features may be realized, and no apriori obstruction exists. 
The nontrivial topology provides the crucial mechanisms of the instability of the center of observables under time translations, and of the residual periodic structure.

\subsection{Quantum time crystal realized by a spin model}
As discussed in Section \ref{sec: constructive!}, the thermodynamic limit may play a crucial role both in breaking time translations and in leaving residual periodic symmetry.
This is clearly displayed by the following spin model.
 
We start by considering a lattice spin one-half system, whose (local) observable algebra $\Alg$ is generated by localized polynomials of the spins $\mathbf{s}^i$, where $i$ denotes the lattice site. We introduce a dynamics defined by spin one-half cutoff Hamiltonians 
$$
H_{V}=H_{1,V}+H_{2,V}\,;
$$
with $H_{1,V}$
invariant under lattice translations and under (spin) rotations, and 
\be
H_{2,V}= - h \sum_{i\in V}s^i_z\,,
\ee 
corresponding to the interaction with a uniform magnetic field $\mathbf h = h \hat{\mathbf z}$ pointing in the $z$ direction.

We consider a translation-invariant state $\omega$, invariant under the dynamics $\alpha_{1,t}$, generated by $H_{1,V}$ in the thermodynamic limit (according to \eqref{infinite-volume}), satisfying the cluster property, and such that:
\be
\omega(s_x^i)\neq0\,.
\ee

The rotational invariance of $H_{1,V}$ implies that it commutes with $\alpha_{2,t}$, generated by $H_{2,V}$ in the thermodynamic limit, so that $\alpha_t = \alpha_{1,t}\alpha_{2,t}$. Then, one has
\bea
\omega(\alpha_t(s^i_x))  = \omega(\alpha_{1,t}\alpha_{2,t}(s^i_x)) 
&=\omega(\alpha_{2,t}(s_x^i))\\
&=\omega(s_x^i) \cos (ht) - \omega(s_y^i) \sin (ht)\,.
\eea

Thus, by Proposition \ref{criterio_t}, time translation symmetry $\alpha_t$ is spontaneously broken in the phase $\Gamma_\omega$ defined by $\omega$, with residual unbroken subgroup $\alpha_{t_n}$  
\be
t_n = \frac{2\pi n}{h}\qquad \text{for $n\in\mathbb Z$}\,.
\ee 
The simplest realization of such a structure is obtained when $H_{1,V}$ is the nearest-neighbour ferromagnetic Heisenberg  Hamiltonian and $\omega$ is the (factorized) state with all spins pointing in the $x$ direction (see \cite{FSCH_QTC}; this model was more recently considered in \cite{HeisenbergQTC}).

As in the case of Floquet time crystals, the eigenstates of $\alpha_t$ do not belong to $\Gamma_\omega$, {and} this is achieved without the introduction of an external periodic drive and therefore without spoiling the one-parameter group property of the dynamics. 

The non-invariance of $\omega$ under $\alpha_t$ ensures the breaking of time translations, in the phase $\Gamma_\omega$ defined by it, because, thanks to the thermodynamic limit and the cluster property, local perturbations of $\omega$ cannot induce a transition to the state $\omega_z$ with all the spins pointing in the $z$ direction (which, therefore, defines a disjoint phase).
 
The model is interesting for the breaking of time translations only in the thermodynamic limit, since no breaking occurs for finite volumes (or for a finite number of spins).

{
It is instructive to retrace the strategy proposed in \cite{Khemani} starting from our state $\omega$. To this purpose, one may define the state
\be
\Omega(A)=\int_0^{2\pi}\frac{d\varphi}{2\pi}\,\omega\left(\beta_{\varphi}(A)\right)\,,
\ee
where $\beta_\varphi$ denotes spin rotations about the $z$ axis. By construction $\Omega$ is invariant under $\alpha_{1,t}$ and $\alpha_{2,t}$, and hence under $\alpha_t$. Considering then $s_\pm^i=s_x^i\pm i s_y^i$, we have $\beta_\varphi(s_\pm^i)=e^{\pm i\varphi}s_\pm^i$ and 
\be
\Omega(s^i_+ s^j_-)
=\omega(s^i_+ s^j_-)
\xrightarrow[|i-j|\to\infty]{}
\omega(s^i_+)\omega(s^j_-)=
\omega(s^i_x)^2+\omega(s^i_y)^2\,,
\ee 
since $\omega$ satisfies the cluster property.
The latter equation highlights that $\Omega$ violates the cluster property if and only if either $\omega(s_x^i)\neq0$ or $\omega(s_y^i)\neq0$, \emph{i.e.} $\omega$ breaks $\alpha_{2,t}$.
}

In agreement with the general characterization of spontaneous symmetry breaking (Section \ref{sec: stbb}), the spontaneous breaking of time translations forbids the existence of a one-parameter group $U(t)$ of unitary operators implementing the time translations $\alpha_t$ in {$\Gamma_{\omega}$. Therefore} it is impossible to define a Hamiltonian, in the thermodynamic limit (only the finite-volume Hamiltonian $H_V$ may be defined). More explicitly, the finite-volume dynamics converges only as an automorphism of the algebra $\Alg$, whereas $\exp\{-it(H_{1,V}+H_{2,V})\}$ does not converge as $V\to\infty$, even if one introduces subtractions and/or counterterms. This further shows the dramatic difference with respect to finite-dimensional systems, like the two level model, which bears no physical relevance for the possibility of breaking time translations.
 
As explicitly realized by the model, in agreement with Proposition \ref{Criterio_centro}, the breaking of time translations with a discrete residual subgroup is immediately displayed once the average observables undergo a periodic motion.

\subsection{An improved Wilczek model}
In order to support the possible existence of quantum time crystals, Wilczek proposed the following many-particle model, consisting of $N$ particles on a ring with unit radius, labeled by the index $i$, with the Hamiltonian: 
\be
H= \sum_{i=1}^N H^\alpha_i - \frac{\lambda}{N-1}\sum_{i\neq j}\delta(\varphi_i-\varphi_j)\,.
\ee
According to the strategy discussed above, with the aim of considering a thermodynamic limit, one has to start by defining the algebra of observables. Since each particle lives on the ring, it is natural to consider, as the $i$th-particle observable algebra, the algebra $\Alg_i$ generated by $U(n_i)=e^{in_i \varphi_i}$, for $n_i\in\mathbb Z$, and $V(\beta_i)=e^{i\beta_i p_i}$\, (\emph{i.e.} the algebra discussed in \ref{sec: particleonacircle}). Then, the $N$-particle algebra is generated by products of the single-particle algebras. 
We consider the model defined by the following Hamiltonian: 
\be
H^{\alpha,\mathcal V,N} = \sum_{i=1}^N
\big[H^{\alpha}_i+\mathcal V_i(\varphi_i)\big]+\sum_{i<j} v(\varphi_i -\varphi_j)
\ee
with $\mathcal V_i(\varphi)$ and $v(\varphi)$ periodic functions, to guarantee that the corresponding time evolution $\alpha^{\alpha,\mathcal V,N}_t$ maps
observables into observables.

{
We remark that the Hamiltonians at the basis of Wilczek's model \cite{Wilczek} and of \cite{Wilczek-excited} do not use periodic potentials and do not guarantee this condition, whose violation is in conflict with the physical interpretation of particles on a circle.
}
	
In order to find phases with broken time translations in the $N\to\infty$
limit, consider the product state $\Psi_0^N \equiv \prod \Psi_0^i$, with $\Psi_0^i(\varphi_i)=(2\pi)^{-1/2}e^{i\alpha\varphi_i}$
the ground state of $H_i^{\alpha}(\varphi_i)$ (one might as well take the product of the same
$\theta$-state for each value of the index $i$). Clearly, in the thermodynamic limit such
a state defines a pure phase and is invariant under the group of
``translations'' of the index $i$, so that the criterion of Section \ref{sec: Ground} applies.

To this purpose, consider the observable
\be
P^f_N= \frac{1}{N} \sum_{i=1}^N f_i(\varphi_i)p_i f_i(\varphi_i) \equiv \frac{1}{N} \sum_{i=1}^N p^f_i,
\ee
where $f_i$ is a smooth (real) function of compact support in $[0,2\pi)$ with periodic extension $f_i(\varphi_i+2\pi)=f_i(\varphi_i)$. The observable $p_i^f$ describes a momentum localized in the support of $f_i$ and its expectation $\langle p_i^f\rangle_i$ on the corresponding state $\Psi_0^i$ gives
\be\begin{aligned}
\langle p_i^f\rangle_i &= \int_0^{2\pi} e^{-i\alpha\varphi}f_i(\varphi) \left(-i\frac{\partial}{\partial \varphi} f_i(\varphi) e^{i\alpha\varphi}\right) \frac{d\varphi}{2\pi} \\
&=-i \int_0^{2\pi} \frac{\partial}{\partial \varphi}f_i^2(\varphi)\frac{d\varphi}{4\pi} + \alpha \int_0^{2\pi}f_i^2(\varphi)\frac{d\varphi}{2\pi},
\end{aligned}\ee
but the first term vanishes by the periodicity of $f_i$ and we are left with the non-vanishing result
\be
\langle p_i^f\rangle_i=\alpha \int_0^{2\pi}f_i^2(\varphi)\frac{d\varphi}{2\pi}.
\ee

Furthermore, denoting by $\alpha^{\alpha,\mathcal V,N}_t$ the time evolution generated by $H^{\alpha,\mathcal V,N}$, and by $\langle \cdot \rangle_{\alpha,N}$ the expectation on the above product state $\Psi_0^N$, we obtain
\be
\frac{d}{dt}\langle \alpha^{\alpha,\mathcal V,N}_t(p_i^f)\rangle_{\alpha,N}\Big|_{t=0}= 
\int_0^{2\pi} f_i^2(\varphi) \mathcal V_i'(\varphi)\frac{d\varphi}{2\pi},
\ee
using the fact that $\Psi_0^N$ is an eigenstate of the kinetic Hamiltonian, but not of  $H^{\alpha,\mathcal V,N}$. Note that the contributions due to $v(\varphi_i-\varphi_j)$ actually vanish.

The macroscopic observable
\be
P^f_\text{av} = \lim_{N\to\infty}P_N^f
\ee
as usual defines an element of the center of the observables 
and 
\be\label{eq: non_P}
\frac{d}{dt}\langle \alpha^{\alpha,\mathcal V,\infty}_t(P_\text{av}^f)\rangle_{\alpha,\infty}\Big|_{t=0}
=
\lim_{N\to\infty}\frac{1}{N}\sum_{i} \int_0^{2\pi} f_i^2(\varphi) \mathcal V_i'(\varphi_)\frac{d\varphi}{2\pi}\,;
\ee
whenever $\int f_i^2 \mathcal V'_i \,d\varphi$ admits nonvanishing Cesaro limit as $i\to\infty$, the average variable $P_\mathrm{av}$ is not invariant under time evolution, and this implies breaking of time translations.

Another possibility is to consider the average observable defined by 
\be
F_\text{av} = \lim_{N\to\infty}\frac{1}{N}\sum_{i=1}^NF_i(\varphi_i)
\ee
where $F_i(\varphi_i)$ is a smooth, periodic function of compact support. 
Proceeding as before, \emph{i.e.} exploiting the periodicity of $F_i$ to get rid of a few intermediate terms, we find
\be\begin{aligned}\label{eq: non_F}
\frac{d^2}{dt^2}\langle\alpha^{\alpha,\mathcal V,\infty}_t\left(F_\text{av}\right)\rangle_{\alpha,\infty}\Big|_{t=0}&=\lim_{N\to\infty}\frac{1}{N}\sum_{i}\int_0^{2\pi} F'_i(\varphi)\mathcal V_i'(\varphi)\frac{d\varphi}{2\pi}.
\end{aligned}\ee
Whenever $\int F'_i\mathcal V_i'\,d\varphi$ has nonzero Cesaro limit as $i\to\infty$,   
we have breaking of time translations.

\section*{Conclusions}
The main result of the paper is the characterization of spontaneous symmetry breaking in pure phases defined by a translationally invariant state in terms of its noninvariance under the symmetry.
Such a state need not be a ground or vacuum state and therefore {our criterion  substantially generalizes the standard one. In particular, it allows for the breaking of time translations. This is explicitly shown to work in explicit models without the need of introducing an ad hoc external periodic drive.} 

\section*{Acknowledgements}
F.S. is grateful to Giovanni Morchio for very useful discussions.


\providecommand{\href}[2]{#2}\begingroup\raggedright\endgroup

\end{document}